\begin{document}

\title{An Irreducible Function Basis of Isotropic Invariants of A Third Order Three-Dimensional Symmetric Tensor}
\author{Zhongming Chen\footnote{%
    Department of Mathematics, School of Science, Hangzhou Dianzi University, Hangzhou 310018, China ({\tt zmchen@hdu.edu.cn}).
    This author's work was supported by the National Natural Science Foundation of China (Grant No. 11701132).}
\and Jinjie Liu\footnote{%
    Department of Applied Mathematics, The Hong Kong Polytechnic University,
    Hung Hom, Kowloon, Hong Kong ({\tt jinjie.liu@connect.polyu.hk}).
    }
\and Liqun Qi\footnote{%
    Department of Applied Mathematics, The Hong Kong Polytechnic University,
    Hung Hom, Kowloon, Hong Kong ({\tt maqilq@polyu.edu.hk}).
    This author's work was partially supported by the Hong Kong Research Grant Council
    (Grant No. PolyU  15302114, 15300715, 15301716 and 15300717).}
\and Quanshui Zheng\footnote{%
    Department of Engineering Mechanics, Tsinghua University, Beijing 100084, China ({\tt zhengqs@tsinghua.edu.cn}).}
\and Wennan Zou\footnote{%
    Institute for Advanced Study, Nanchang University, Nanchang 330031, China ({\tt zouwn@ncu.edu.cn}).
    This author was supported by the National Natural Science Foundation of China (Grant No. 11372124)}
    }

\date{June 22, 2018}
\maketitle

\begin{abstract}    In this paper, we present an eleven invariant isotropic irreducible function basis of a third order three-dimensional symmetric tensor.  This irreducible function basis is a proper subset of the Olive-Auffray minimal isotropic integrity basis of that tensor.   The octic invariant and a sextic invariant in the Olive-Auffray integrity basis are dropped out.   This result is of significance to the further research of irreducible function bases of higher order tensors.

  \textbf{Key words.} irreducible function basis, symmetric tensor, syzygy.
\end{abstract}

\newtheorem{Theorem}{Theorem}[section]
\newtheorem{Definition}[Theorem]{Definition}
\newtheorem{Lemma}[Theorem]{Lemma}
\newtheorem{Corollary}[Theorem]{Corollary}
\newtheorem{Proposition}[Theorem]{Proposition}
\newtheorem{Conjecture}[Theorem]{Conjecture}
\newtheorem{Question}[Theorem]{Question}

\renewcommand{\hat}[1]{\widehat{#1}}
\renewcommand{\tilde}[1]{\widetilde{#1}}
\renewcommand{\bar}[1]{\overline{#1}}
\newcommand{\REAL}{\mathbb{R}}
\newcommand{\COMPLEX}{\mathbb{C}}
\newcommand{\SPHERE}{\mathbb{S}^2}
\newcommand{\diff}{\,\mathrm{d}}
\newcommand{\st}{\mathrm{s.t.}}
\newcommand{\T}{\top}
\newcommand{\vt}[1]{{\bf #1}}
\newcommand{\aaa}{{\vt{a}}}
\newcommand{\ddd}{{\vt{d}}}
\newcommand{\x}{{\vt{x}}}
\newcommand{\y}{{\vt{y}}}
\newcommand{\z}{{\vt{z}}}
\newcommand{\uu}{{\vt{u}}}
\newcommand{\vv}{{\vt{v}}}
\newcommand{\ww}{{\vt{w}}}
\newcommand{\e}{{\vt{e}}}
\newcommand{\g}{{\vt{g}}}
\newcommand{\0}{{\vt{0}}}
\newcommand{\Ten}{\bf{T}}
\newcommand{\HH}{\mathbb{H}}
\newcommand{\A}{{\bf A}}
\newcommand{\B}{{\bf B}}
\newcommand{\C}{\mathcal{C}}
\newcommand{\D}{{\bf D}}
\newcommand{\E}{\bf{E}}
\newcommand{\OOO}{\mathcal{O}}
\newcommand{\U}{{\bf{U}}}
\newcommand{\V}{{\bf{V}}}
\newcommand{\W}{{\bf{W}}}
\newcommand{\I}{{\bf{I}}}
\newcommand{\II}{{\mathcal{I}}}
\newcommand{\OO}{{\bf{O}}}
\newcommand{\RESULTANT}{\mathrm{Res}}

\newpage
\section{Introduction}

Tensor function representation theory constitutes an important fundamental of theoretical and applied mechanics. Representations of complete and irreducible basis for isotropic
invariants could predict the available nonlinear constitutive theories by the formulation of energy term. Since irreducible representations for tensor-valued functions can be immediately yielded from known irreducible representations for invariants (scalar-valued functions) \cite{Zh-94}, the studies of isotropic function basis have most priority.  Perhaps we may trace back the modern development of tensor representation theory to the great mathematician Hermann Weyl's book \cite{We-16}. This book was first published in 1939. Here, we cite its new edition in 2016.     Then, since the 1955 paper of Rivilin and Ericksen \cite{RE-55}, many researchers, such as Smith, Pipkin, Spencer, Boehler, Betten, Pennisi and Zheng \cite{SR-58, PR-59, Boe-87, PT-87, Spe-70, ZB-95, Zh-96}, to name only a few of them here, have made important contributions to tensor representation theory.    For the literature of tensor representation theory before 1994, people may find it in the 1994 survey paper of Zheng \cite{Zh-94}.   The development of tensor representation theory after 1994 paid more attentions to
minimal integrity bases of isotropic invariants of third and fourth order three-dimensional tensors \cite{BKO-94, OA-14, OKA-17, SB-97, Zh-96, ZB-95}.  The polynomial basis of anisotropic invariants of the elasticity tensor was studied by Boehler, Kirillov and Onat \cite{BKO-94} in 1994. 
Zheng and Betten \cite{ZB-95} and Zheng \cite{Zh-96} studied the tensor function representations involving tensors of
orders higher than two.  Smith and Bao \cite{SB-97} presented minimal integrity bases of isotropic invariants for third and fourth order three-dimensional symmetric and traceless tensors in 1997.
Note that Boehler, Kirillov and Onat \cite{BKO-94} had already given a minimal integrity basis for a fourth order three-dimensional symmetric and traceless tensor in 1994.  But the minimal integrity basis given by Smith and Bao \cite{SB-97} for the same tensor is slightly different \cite{CQZ-18}.
In 2014, an integrity basis with thirteen isotropic invariants of a (completely) symmetric third order three-dimensional tensor was presented by Olive and Auffray \cite{OA-14}.  Olive \cite{Ol-17}[p. 1409] stated that this integrity basis is a minimal integrity basis.   Olive, Kolev and Auffray \cite{OKA-17} presented a minimal integrity basis of the elasticity tensor, with $297$ invariants, in 2017.   Very recently, Chen, Qi and Zou \cite{CQZ-18} showed that the four invariant Smith-Bao minimal isotropic integrity basis of a third order three-dimensional symmetric and traceless tensor is also an irreducible function basis of that tensor.

In this paper, the summation convention is used.    If an index is repeated twice in a product, then it means that this product is summed up with respect to this index from $1$ to $3$.

Suppose that a tensor $\A$ has the form $A_{i_1\dots i_m}$ under an orthonormal basis $\{ \e_i \}$.  A scalar function of $\A$, $f(\A) = f(A_{i_1\dots i_m})$ is said to be an isotropic invariant of $\A$ if for any orthogonal matrix $q_{ij}$, we have
$$f(A_{i_1\dots i_m}) = f(A_{j_1\dots j_m}q_{i_1j_1}\dots q_{i_mj_m}).$$
  By \cite{Zh-94, ZB-95}, a set of isotropic polynomial invariants $f_1, \dots, f_r$ of $\A$ is said to be an \emph{integrity basis} of $\A$ if any isotropic polynomial invariant is a polynomial of $f_1, \dots, f_r$, and a set of isotropic invariants $f_1, \dots, f_m$ of $\A$ is said to be a \emph{function basis} of $\A$ if any isotropic invariant is a function of $f_1, \dots, f_m$.  An integrity basis is always a function basis but not vice versa \cite{OKA-17}.    A set of isotropic polynomial invariants $f_1, \dots, f_r$ of $\A$ is said to be \emph{polynomially irreducible} if none of them can be a polynomial of the others. Similarly, a set of isotropic invariants $f_1, \dots, f_m$ of $\A$ is said to be \emph{functionally irreducible} if none of them can be a function of the others.   An integrity basis of $\A$ is said to be a \emph{minimal integrity basis} of $\A$ if it is polynomially irreducible, and a function basis of $\A$ is said to be an \emph{irreducible function basis} of $\A$ if it is functionally irreducible.

In this paper, we present an eleven invariant isotropic irreducible function basis of a third order three-dimensional symmetric tensor.  This irreducible function  basis is a proper subset of the Olive-Auffray minimal isotropic integrity basis of that tensor.   The octic invariant and a sextic invariant in the Olive-Auffray integrity basis are dropped out.

In the next section, some preliminary results are given.  These include the minimal integrity basis result of Smith and Bao \cite{SB-97} for a third order three-dimensional symmetric and traceless tensor, the consequent result of Chen, Qi and Zou \cite{CQZ-18} to confirm it is also an irreducible function basis, and the result of Olive and Auffray \cite{OA-14} for a minimal integrity basis of a third order three-dimensional symmetric tensor.

In Section 3, we present an eleven invariant isotropic function basis of a third order three-dimensional symmetric tensor.  This function basis is obtained by using two syzygy relations to drop out the octic invariant and a sextic invariant from the Olive-Auffray integrity basis. Note that \emph{a syzygy relation} is a set of coefficients in the polynomial ring such that the corresponding element generated by the function basis vanishes in the module.

Then in Section 4, we show that this function basis is indeed an irreducible function basis of a third order three-dimensional symmetric tensor.

This result is significant to the further research of irreducible function bases of higher order tensors.
First, this is the first time to give an irreducible function basis of isotropic invariants of a third order three-dimensional symmetric tensor.
Second, there are still three syzygy relations among these eleven invariants.   This shows that an irreducible function basis consisting of polynomial invariants may not be algebraically minimal.
We discuss this in Section 5.

From now on, we use $\A$ to denote a third order three-dimensional tensor, and assume that it is represented by $A_{ijk}$ under an orthonormal basis $\{ \e_i \}$.
We consider the three-dimensional physical space.  Hence $i, j, k \in \{1, 2, 3\}$.   We say that $\A$ is a symmetric tensor if for $i, j, k = 1, 2, 3$, we have
$$A_{ijk} = A_{jik} = A_{ikj} = A_{kji}.$$
We say that $\A$ is traceless if
$$A_{iij} = A_{ijj} = A_{iji} = 0.$$
We use $\0$ to denote the zero vector and $\OOO$ to denote the third order three-dimensional zero tensor.

\section{Preliminaries}

In this section, we review the minimal integrity basis result of Smith and Bao \cite{SB-97} for a third order three-dimensional symmetric and traceless tensor, the consequent result of Chen, Qi and Zou \cite{CQZ-18} to confirm it is also an irreducible function basis, and the minimal integrity basis result of Olive and Auffray \cite{OA-14} for a third order three-dimensional symmetric tensor.

\subsection{An Irreducible Function Basis of A Third Order Three-Dimensional Symmetric and Traceless Tensor}

In 1997, Smith and Bao \cite{SB-97} presented the following theorem.

\begin{Theorem} \label{t2.1}
Let $\D$ be an irreducible (i.e., symmetric and traceless) third order three-dimensional tensor.   Denote $v_p:=D_{ijk}D_{ij\ell}D_{k\ell p}$,
$I_2 := D_{ijk}D_{ijk}$,  $I_4 := D_{ijk}D_{ij\ell}D_{pqk}D_{pq\ell}$, $I_6 := v_iv_i$ and  $I_{10} := D_{ijk}v_iv_jv_k$.
Then $\{ I_2, I_4, I_6, I_{10} \}$ is a minimal integrity basis of $\D$.
\end{Theorem}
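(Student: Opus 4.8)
The plan is to identify the ring of isotropic (i.e.\ $O(3)$-)invariants of $\D$ with a classically understood invariant ring and then read off both generation and minimality. A symmetric traceless third order three-dimensional tensor is precisely a harmonic cubic, so the space of such $\D$ is the $7$-dimensional irreducible $SO(3)$-module $V_3$. Under the double cover $SU(2)\to SO(3)$ this module is the space of binary sextics, whose $SL(2,\COMPLEX)$-invariant theory was settled classically (Clebsch, Gordan, Igusa): the invariant ring is generated by invariants of degrees $2,4,6,10$ and $15$; it is a free module of rank two over the polynomial subring $\COMPLEX[A,B,C,D]$ generated by the four invariants $A,B,C,D$ of degrees $2,4,6,10$ (which form a homogeneous system of parameters and are algebraically independent); and the single degree-$15$ generator $E$ satisfies one relation $E^2=F$ with $F\in\COMPLEX[A,B,C,D]$.

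First I would pin down which generators survive the full orthogonal group. The central element $-\I\in O(3)$ sends $A_{ijk}\mapsto(-1)^3A_{ijk}=-A_{ijk}$, so an invariant homogeneous of degree $d$ in $\D$ is multiplied by $(-1)^d$ under $-\I$. Consequently every odd-degree generator fails to be $O(3)$-invariant while the even-degree ones survive. The degree-$15$ generator $E$ is thus a pseudo-invariant and is excluded, whereas $E^2$ (degree $30$) is a genuine invariant already lying in $\COMPLEX[A,B,C,D]$ by the classical syzygy. Hence the ring of isotropic invariants of $\D$ is exactly the reflection-even part, namely the free polynomial ring on the four generators of degrees $2,4,6,10$.

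It then remains to match these four abstract generators with the explicit contractions $I_2,I_4,I_6,I_{10}$. I would verify that each $I_{2k}$ is a nonzero isotropic polynomial invariant of the stated degree and that, at each successive degree $2,4,6,10$, the new contraction is not a polynomial in the lower ones, equivalently that $I_2,I_4,I_6,I_{10}$ are algebraically independent. Algebraic independence can be certified by evaluating the Jacobian $\partial(I_2,I_4,I_6,I_{10})/\partial(\text{components of }\D)$ at a single generic tensor and checking that it has rank $4$, or by computing the Molien series of $V_3$ under $O(3)$ and confirming it equals $1/[(1-t^2)(1-t^4)(1-t^6)(1-t^{10})]$; the latter, combined with the Hochster--Roberts theorem that invariant rings of reductive groups are Cohen--Macaulay, forces the ring to be a free polynomial ring on generators in exactly these degrees. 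Once $I_2,I_4,I_6,I_{10}$ are shown to be algebraically independent generators, minimality is immediate, since in a free polynomial ring no generator can be written as a polynomial in the others.

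The main obstacle is the explicit identification step: verifying that these particular contractions, rather than some other combinations, genuinely realize the degree $2,4,6,10$ generators, i.e.\ that none of them collapses into the subalgebra generated by the lower-degree ones. This is a concrete but delicate computation with the index contractions defining $v_p$, $I_6=v_iv_i$ and $I_{10}=D_{ijk}v_iv_jv_k$, and it is most cleanly handled either by exhibiting explicit test tensors on which the candidate higher invariant takes a value not reproduced by any polynomial in the lower invariants, or by the single generic-point Jacobian rank computation mentioned above.
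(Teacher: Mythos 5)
The paper itself contains no proof of Theorem \ref{t2.1}: it is quoted verbatim from Smith and Bao \cite{SB-97} as a preliminary, so there is no internal argument to compare against. Your plan is, in substance, the classical route that underlies the Smith--Bao result. The chain you set up is sound: symmetric traceless third order tensors form the $7$-dimensional irreducible $SO(3)$-module, which pulls back under $SU(2)\to SO(3)$ to binary sextics (the kernel $\{\pm 1\}$ acts trivially on $\mathrm{Sym}^6$, so $SO(3)$-invariants match $SL(2,\COMPLEX)$-invariants); the sextic invariant ring is generated in degrees $2,4,6,10,15$ with the degree-$15$ generator $E$ satisfying $E^2\in\COMPLEX[A,B,C,D]$ and $A,B,C,D$ algebraically independent; since $-\I\in O(3)$ multiplies a degree-$d$ invariant of an odd-order tensor by $(-1)^d$, the $O(3)$-invariant ring is exactly the even part, i.e.\ the free polynomial ring on the degree $2,4,6,10$ generators; and in a polynomial ring of Krull dimension $4$, any four-element generating set is automatically minimal. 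You are also right that, given this structure theorem, the entire identification of the explicit contractions $I_2,I_4,I_6,I_{10}$ with the abstract generators reduces to their algebraic independence, certifiable by a rank-$4$ Jacobian at a single tensor.

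Two caveats. First, that Jacobian (or test-tensor) verification is the one piece of actual content in matching the concrete contractions to the generators, and you defer it rather than carry it out; it is a routine finite computation, but the proof is incomplete without it, since a priori $I_{10}$, say, could collapse into $\REAL[I_2,I_4,I_6]$. (A related small point: the classical results are stated over $\COMPLEX$, while the theorem concerns real tensors; this transfers because the real invariant ring of the compact group complexifies to the invariant ring of the complexified group, but it deserves a sentence.) Second, your proposed alternative certification is not valid as stated: a Molien series equal to $1/[(1-t^2)(1-t^4)(1-t^6)(1-t^{10})]$ together with Cohen--Macaulayness of the invariant ring does not by itself force the ring to be a free polynomial ring; Hilbert series of that shape do not imply polynomiality without exhibiting a homogeneous system of parameters in those degrees over which the ring is generated by $1$. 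Rely on the binary-sextic structure theorem, which gives polynomiality directly, and use the Molien series only as a consistency check.
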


Very recently, Chen, Qi and Zou \cite{CQZ-18} proved the following theorem.

\begin{Theorem} \label{t2.2}
Under the notation of Theorem \ref{t2.1},
the Smith-Bao minimal integrity basis $\{ I_2, I_4, I_6, I_{10} \}$ is also an irreducible function basis of $\D$.
\end{Theorem}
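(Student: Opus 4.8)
The plan is to exploit the fact that an integrity basis is automatically a function basis, so that the entire content of Theorem~\ref{t2.2} lies in the \emph{irreducibility}. By Theorem~\ref{t2.1} the set $\{I_2,I_4,I_6,I_{10}\}$ is a minimal integrity basis of $\D$, hence in particular a function basis; what remains is to show it is functionally irreducible, that is, that none of the four invariants can be written as a function of the other three.

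The main device is a separation criterion. To prove that a given $I_j$ is not a function of the remaining three invariants, it suffices to exhibit two traceless symmetric tensors $\D^{(1)}$ and $\D^{(2)}$ on which the other three invariants take equal values while $I_j(\D^{(1)})\neq I_j(\D^{(2)})$. Indeed, were $I_j=\phi(\cdot)$ some function of the other three, then equality of those three on $\D^{(1)}$ and $\D^{(2)}$ would force $I_j(\D^{(1)})=I_j(\D^{(2)})$, a contradiction. Thus the whole proof reduces to producing, for each of the four indices, one such separating pair.

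To build these pairs I would work with low-parameter families of traceless symmetric tensors (the $O(3)$-orbit space here being four-dimensional), chosen so that the four invariants become tractable explicit polynomials in the parameters; the homogeneity of the invariants (degrees $2,4,6,10$) then supplies an extra normalization that helps match three of the four values while leaving the fourth free. For $I_2$, $I_4$, and $I_6$, whose degrees are small, I expect the required pairs to fall out of such families after a direct, if tedious, evaluation of the four invariants.

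The hard part will be $I_{10}$. Structurally, $I_2,I_4,I_6$ are algebraically independent while $I_{10}$ is a secondary invariant whose square is a polynomial $P(I_2,I_4,I_6)$; hence $I_{10}$ is already determined \emph{up to sign} by the other three, and the crux is to show that both signs genuinely occur over a common fibre of $(I_2,I_4,I_6)$. Concretely, I would seek a parametrized family $\D(s)$ in which $I_2,I_4,I_6$ are even functions of a parameter $s$ while $I_{10}$ is odd in $s$, so that $\D(s)$ and $\D(-s)$ form a separating pair for a generic small $s\neq 0$; verifying that such a family exists and that $I_{10}$ does not vanish identically on it is the delicate step. Once all four separating pairs are in hand, functional irreducibility—and hence the theorem—follows at once.
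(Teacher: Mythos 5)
Your separation criterion is exactly the right tool: it is the Pennisi--Trovato method, which is how the paper proves its own irreducibility result (Theorem \ref{t4.2}), and the paper states Theorem \ref{t2.2} without proof, quoting it from \cite{CQZ-18}, where irreducibility is likewise established by exhibiting such separating pairs. The problem is that your proposal ends where the actual work begins. The reduction ``if two tensors agree on three invariants but not the fourth, then the fourth is not a function of the other three'' is a two-line observation; the entire mathematical content of the theorem lies in producing the four explicit pairs, and you produce none of them. For $I_2$, $I_4$, $I_6$ you only \emph{expect} the pairs to fall out of unspecified low-parameter families, and for $I_{10}$ you describe a family you would \emph{seek}. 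As written this is a strategy outline, not a proof.

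There is also a concrete error in your structural analysis of $I_{10}$, which is internally inconsistent with your own dimension count. The harmonic space of third order traceless symmetric tensors has dimension $7$ and generic orbits are three-dimensional, so the orbit space has dimension $4$ and the field of invariant rational functions has transcendence degree $4$; since by Theorem \ref{t2.1} the invariant ring is generated by exactly the four invariants $I_2, I_4, I_6, I_{10}$, these must be algebraically independent, and the invariant algebra is in fact free. Hence there is \emph{no} relation $I_{10}^2 = P(I_2, I_4, I_6)$, and $I_{10}$ is not ``determined up to sign'' by the other three. (The syzygies in this paper, (\ref{e1})--(\ref{e5}), are relations among the mixed invariants of the full symmetric tensor $\A$ involving $\uu$, not among $I_2, I_4, I_6, I_{10}$ alone.) This error makes your task sound harder than it is: a generic fibre of $(I_2, I_4, I_6)$ is a curve in the orbit space along which $I_{10}$ is non-constant, so separating pairs for $I_{10}$ exist in abundance, and your even/odd-in-$s$ family remains a legitimate device for exhibiting one --- though note that the naive symmetry $\D \mapsto -\D$ is useless here, since all four invariants have even degree and are unchanged by it, unlike the $\uu \mapsto -\uu$ trick of Proposition \ref{p4.1}, which works only for the mixed invariants of odd degree in $\uu$. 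To complete the proof you must supply four explicit pairs (numerical examples suffice, as in Part B of the proof of Theorem \ref{t4.2}), and correct the justification for the $I_{10}$ step.
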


\subsection{The Olive-Auffray Integrity Basis of A Third Order Three-Dimensional Symmetric Tensor}

According to \cite{Zh-96}, we decompose a third order three-dimensional symmetric tensor $\A$ into a third order three-dimensional symmetric and traceless tensor $\D$ and a vector $\uu$, with
\begin{equation*}
  u_i=A_{i\ell\ell}
\end{equation*}
and
\begin{equation*}
  D_{ijk} = A_{ijk}-\frac{1}{5}\left(u_k\delta_{ij}+u_j\delta_{ik}+u_i\delta_{jk}\right),
\end{equation*}
where $\delta_{pq}=1$ if $p=q$ and $\delta_{pq}=0$ if $p\neq q$.

In 2014, Olive and Auffray \cite{OA-14} presented the following theorem.

\begin{Theorem} \label{t2.3}
Let $\A$ be a third order three-dimensional symmetric tensor with the above decomposition.
The following thirteen invariants
\begin{equation*}
\begin{array}{lll}
  I_2 := D_{ijk}D_{ijk}, & J_2 := u_iu_i, & I_4 := D_{ijk}D_{ij\ell}D_{pqk}D_{pq\ell}, \\
  J_4 := D_{ijk}u_kD_{ij\ell}u_{\ell}, & K_4 := D_{ijk}D_{ij\ell}D_{k\ell p}u_p, & L_4 := D_{ijk}u_ku_ju_i, \\
  I_6 := v_iv_i, & J_6 := D_{ijk}D_{ij\ell}u_kD_{\ell pq}u_pu_q, & K_6 := v_kw_k, \\
  L_6 := D_{ijk}D_{ij\ell}u_kv_{\ell}, & M_6 := D_{ijk}D_{pqk}u_iu_ju_pu_q, & I_8 := D_{ijk}D_{ij\ell}u_kD_{pq\ell}D_{pqr}v_r, \\
  I_{10} := D_{ijk}v_iv_jv_k, &&
\end{array}
\end{equation*}
where $v_p:=D_{ijk}D_{ij\ell}D_{k\ell p}$ and $w_k:=D_{ijk}u_iu_j$, form an integrity basis of $\A$.
\end{Theorem}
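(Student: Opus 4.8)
The plan is to treat this as the computation of a joint integrity basis for the pair $(\D,\uu)$ under the full orthogonal group $O(3)$: by the stated decomposition, every isotropic polynomial invariant of $\A$ is a polynomial invariant of $(\D,\uu)$ and conversely, so the two problems coincide. The first step is to invoke the first fundamental theorem of invariant theory for $O(3)$, which says that every such invariant is a polynomial in the complete Kronecker-delta contractions of products of copies of $D_{ijk}$ and $u_i$ (no Levi--Civita symbol is needed, since reflections are allowed). Each of the thirteen listed quantities is exactly such a contraction, so they are all genuine isotropic invariants; the real content of the theorem is \emph{completeness}, namely that they generate the whole ring.

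To locate the generators I would first pin down the numerology by computing the bigraded Molien--Weyl series of the ring of $O(3)$-invariants of $\HH_3\oplus\HH_1$, grading separately by the degree in $\D$ and in $\uu$. This is most cleanly done through the classical correspondence $\HH_n\cong S^{2n}\COMPLEX^2$ under $SO(3)\cong PSU(2)$, which turns $\D$ into a binary sextic and $\uu$ into a binary quadratic, so that the series can be written in closed form. Expanding it should reveal that generators are forced in precisely the bidegrees $(2,0),(0,2),(4,0),(2,2),(3,1),(1,3),(6,0),(3,3),(4,2),(5,1),(2,4),(7,1),(10,0)$ --- the bidegrees of $I_2,J_2,I_4,J_4,K_4,L_4,I_6,J_6,K_6,L_6,M_6,I_8,I_{10}$ --- and also fix the bidegrees at which the first syzygies appear.

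Next I would construct explicit invariants in each required bidegree, either by forming the appropriate transvectants of the sextic and quadratic or, staying in tensor language, by writing the natural contractions, and verify that each coincides with one of the thirteen candidates and is nonzero and independent modulo products of lower-degree invariants. Completeness then follows by showing that the Hilbert series of the subalgebra generated by the thirteen invariants agrees term by term with the Molien--Weyl series; equivalently, by Gordan's finiteness theorem for the joint covariants of a binary sextic and a binary quadratic, every higher contraction must reduce to a polynomial in the thirteen generators.

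The step I expect to be the main obstacle is precisely this completeness/reduction: it is easy to exhibit the thirteen invariants, but proving that \emph{no} further generator is needed in any higher bidegree requires organizing the transvectant (or contraction) calculus so that every high-degree invariant is explicitly reduced, which in turn means controlling the finitely many syzygies relating the generators and matching the Molien--Weyl coefficients up to the Noether degree bound. The accompanying bookkeeping --- translating each classical binary-form generator back into the correct tensor contraction and checking that none is redundant --- is laborious but routine once the Molien--Weyl series has fixed the target dimensions in every bidegree.
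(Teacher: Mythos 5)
You should first note that the paper does not prove Theorem \ref{t2.3} at all: it is a preliminary, quoted from Olive and Auffray \cite{OA-14} (with minimality attributed to \cite{Ol-17}), and the paper's own contributions (Theorems \ref{t3.1} and \ref{t4.2}) merely build on it. So there is no in-paper proof to diverge from; what your proposal does is reconstruct, in outline, the strategy of the original Olive--Auffray proof itself: identify isotropic invariants of $\A$ with joint $O(3)$-invariants of the pair $(\D,\uu)$, pass through the harmonic/binary-form dictionary $\HH_3\oplus\HH_1 \leftrightarrow S^6\COMPLEX^2\oplus S^2\COMPLEX^2$, and generate the invariant ring by transvectants organized by Gordan's algorithm, with the Molien/Hilbert series as the accounting device. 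Your bidegree bookkeeping is correct (the thirteen invariants do sit in exactly the bidegrees you list), and the reduction of the problem to the pair $(\D,\uu)$ is exactly how the cited proof is set up, so the plan is the right one.

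Two substantive cautions, however. First, a bigraded Molien--Weyl series only gives lower bounds: it can show that new generators are \emph{needed} in certain bidegrees, but it cannot by itself rule out further generators in higher bidegrees. Completeness requires either actually running Gordan's algorithm to termination (as Olive and Auffray do) or computing the Hilbert series of the subalgebra generated by the thirteen candidates, which presupposes full control of their syzygy ideal; you correctly flag this as the main obstacle, but since it is the entire content of the theorem, your text as written is a program rather than a proof. Second, mind the $O(3)$ versus $SO(3)$ parity issue when importing classical results: the binary-form dictionary computes $SL(2)$ (equivalently $SO(3)$) invariants, and the classical generator lists for the sextic and for the sextic--quadratic pair contain odd-total-degree elements (e.g., the degree-$15$ invariant of the binary sextic) that change sign under $-\mathrm{id}$ and hence are not isotropic in this paper's sense, where arbitrary orthogonal $q_{ij}$ are allowed. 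Your remark that reflections permit delta-only contractions implicitly imposes the right even-degree restriction, but it must be enforced consistently both in the Molien computation and when translating the classical generators back to tensor contractions, or the count of thirteen will not come out.
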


As an integrity basis is always a function basis, we may start from the Olive-Auffray integrity basis
$$\{I_2,J_2,I_4,J_4,K_4,L_4,I_6,J_6,K_6,L_6,M_6,I_8,I_{10}\}$$
to find an irreducible function basis of $\A$.

\section{An Eleven Invariant Function Basis}

 In this section, we show that the following eleven invariant set
 $$\{I_2,J_2,I_4,J_4,K_4,L_4,I_6,J_6,L_6,M_6,I_{10}\}$$
 is a function basis of the third order three-dimensional symmetric tensor $\A$.  Note that this set is obtained by dropping $K_6$ and $I_8$ from the Olive-Auffray integrity basis
   $$\{I_2,J_2,I_4,J_4,K_4,L_4,I_6,J_6,K_6,L_6,M_6,I_8,I_{10}\}.$$
 Thus, the task of this section is to show that $K_6$ and $I_8$ can be dropped out for a function basis.

 We first prove the following proposition.

 \begin{Proposition} \label{p3.1}
 In the Olive-Auffray integrity basis, we have
 $$2I_2J_2 - 3 J_4 \ge 0,$$
 where equality holds if and only if either $\D = \OOO$ or $\uu=\0$.
 \end{Proposition}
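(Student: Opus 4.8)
The plan is to exploit the rotational invariance of all three invariants $I_2$, $J_2$, $J_4$ so as to reduce to one convenient frame. Since the quantity $2I_2J_2 - 3J_4$ is itself an isotropic invariant and since the conditions $\D = \OOO$ and $\uu = \0$ are both rotation-invariant, I may assume without loss of generality that $\uu$ is aligned with the third basis vector, say $\uu = u\e_3$ with $u = |\uu| \ge 0$ (the case $\uu = \0$ corresponding to $u = 0$). In this frame $J_2 = u^2$, $I_2 = \sum_{ijk}D_{ijk}^2 =: S$, and, since $D_{ijk}u_k = uD_{ij3}$, one gets $J_4 = u^2\sum_{ij}D_{ij3}^2 =: u^2 T_3$. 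Hence $2I_2J_2 - 3J_4 = u^2\,(2S - 3T_3)$, so the whole statement reduces to the scalar inequality $2S \ge 3T_3$ for a symmetric traceless tensor, together with its equality analysis.

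Next I would expand $S$ and $T_3$ in the components of $\D$, keeping track of the multiplicities forced by the full index symmetry (in $S$ a component with three distinct indices carries weight $6$, one with exactly two equal indices weight $3$, and a diagonal one weight $1$). A direct bookkeeping yields
\[
2S - 3T_3 = 2D_{111}^2 + 2D_{222}^2 + 6D_{112}^2 + 6D_{122}^2 + 6D_{123}^2 + 3D_{113}^2 + 3D_{223}^2 - D_{333}^2,
\]
in which every term is a nonnegative square except the last. This is precisely where the naive bound fails: dropping the $k\ne 3$ slices (equivalently, a componentwise Cauchy--Schwarz on $w_{ij}=D_{ijk}u_k$) only gives $J_4 \le I_2J_2$, i.e. the weaker constant $1$ in place of the required $2/3$. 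So the crux is to dominate the single negative term $-D_{333}^2$ using the tracelessness of $\D$.

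The key step, and the main obstacle, is therefore to invoke the traceless relation $D_{113} + D_{223} + D_{333} = 0$, that is $D_{333} = -(D_{113}+D_{223})$. Substituting $D_{333}^2 = (D_{113}+D_{223})^2$ turns the offending term into a quadratic form in $D_{113},D_{223}$, and combining it with the surviving $3D_{113}^2 + 3D_{223}^2$ produces $2D_{113}^2 - 2D_{113}D_{223} + 2D_{223}^2$, whose matrix $\left(\begin{smallmatrix}2 & -1\\ -1 & 2\end{smallmatrix}\right)$ has determinant $3$ and is positive definite. Thus
\[
2S - 3T_3 = 2D_{111}^2 + 2D_{222}^2 + 6D_{112}^2 + 6D_{122}^2 + 6D_{123}^2 + \left(2D_{113}^2 - 2D_{113}D_{223} + 2D_{223}^2\right) \ge 0,
\]
which establishes the inequality.

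Finally I would read off the equality case directly from this expression. Its right-hand side is a positive-definite quadratic form in the seven components $D_{111}, D_{222}, D_{112}, D_{122}, D_{123}, D_{113}, D_{223}$, and these are exactly a free parametrization of the seven-dimensional space of symmetric traceless third order tensors, the remaining three components $D_{333}, D_{133}, D_{233}$ being fixed by the three traceless relations. Hence $2S - 3T_3 = 0$ forces all seven, and then all ten, components to vanish, i.e. $\D = \OOO$. Combined with the identity $2I_2J_2 - 3J_4 = u^2\,(2S-3T_3)$, equality holds if and only if $u = 0$ or $2S - 3T_3 = 0$, that is, if and only if $\uu = \0$ or $\D = \OOO$, as claimed.
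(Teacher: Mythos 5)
Your proof is correct, and it takes a genuinely different route from the paper. The paper's own proof is computational: it minimizes $2I_2J_2-3J_4$ over the compact set $\{I_2=1,\,J_2=1\}$ using the global polynomial optimization software GloptiPoly 3 with SeDuMi, finds the minimum value $0.2>0$, and deduces the equality characterization from positivity plus homogeneity. You instead give an exact algebraic certificate: the rotation to the frame $\uu=u\e_3$ is legitimate because all quantities involved are isotropic and the decomposition $\A\mapsto(\D,\uu)$ is equivariant, and I have checked your bookkeeping — the expansion $2S-3T_3=2D_{111}^2+2D_{222}^2+6D_{112}^2+6D_{122}^2+6D_{123}^2+3D_{113}^2+3D_{223}^2-D_{333}^2$ is right (the $D_{133}$ and $D_{233}$ terms cancel exactly), and substituting the traceless relation $D_{333}=-(D_{113}+D_{223})$ does yield the positive definite block $\bigl(\begin{smallmatrix}2&-1\\-1&2\end{smallmatrix}\bigr)$. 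Your equality analysis is also sound: the seven components $D_{111},D_{222},D_{112},D_{122},D_{123},D_{113},D_{223}$ freely parametrize the traceless symmetric $\D$, since the three trace conditions determine $D_{133},D_{233},D_{333}$ linearly, so vanishing of your positive definite form forces $\D=\OOO$. What your approach buys is rigor and self-containedness — the paper's argument rests on a floating-point semidefinite-programming certificate of a global minimum, which is not a formal proof, whereas your sum-of-squares decomposition is verifiable by hand and reads off the equality case immediately; as a consistency bonus, the smallest generalized eigenvalue of your form against $S$ (attained along $D_{113}=D_{223}$ with the other free components zero) is exactly the value $0.2$ that the paper's solver reports. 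What the paper's approach buys is automation: the same optimization recipe applies unchanged to higher-degree or less structured invariant inequalities where such a clean explicit completion of squares may be out of reach.
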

 \begin{proof}
 By definition, if either $\D = \OOO$ or $\uu=\0$, we have $I_2J_2 = 0$ and $J_4 = 0$.  Hence $2I_2J_2 - 3 J_4 = 0$ in this case.

Consider the optimization problem
$$\min \{ 2I_2J_2 - 3 J_4 : D_{ijk}D_{ijk}=1, u_iu_i=1.\},$$
where the variables are the seven independent components of $\D$ and the three components of $\uu$.     Using GloptiPoly 3 \cite{HLL-09} and SeDuMi \cite{St-06}, we compute  the minimum value of this optimization problem is $0.2$, where the minimizer is $D_{111} = 0.2829, D_{112}=D_{113}=0, D_{122} = -0.2828$, $D_{123}=-0.2450$, $D_{222}=0$, $D_{223}=-0.2828$, $u_1 = -0.4471$, $u_2 = -0.7746$, $u_3=-0.4474$.   Hence, the minimum value is positive.  This implies that if $2I_2J_2 - 3 J_4 = 0$ then either $\D = \OOO$ or $\uu = \0$.
 \end{proof}

We are now ready to prove the following theorem.

\begin{Theorem} \label{t3.1}
 The eleven invariant set $\{I_2,J_2,I_4,J_4,K_4,L_4,I_6,J_6,L_6,M_6,I_{10}\}$ is a function basis of the third order three-dimensional symmetric tensor $\A$.
\end{Theorem}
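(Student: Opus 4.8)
The plan is to leverage that the thirteen Olive--Auffray invariants form an integrity basis (Theorem \ref{t2.3}), hence a function basis; consequently it suffices to express the two dropped invariants $K_6$ and $I_8$ as functions of the remaining eleven. To make the underlying three-dimensional linear algebra transparent, I would first set $B_{ij} := D_{ik\ell}D_{jk\ell}$, a symmetric positive semidefinite second order tensor, and rewrite every generator as a dot product or a $B$-weighted product of the three vectors $\uu$, $\vv$ (with $v_p = B_{k\ell}D_{k\ell p}$) and $\ww$ (with $w_k = D_{ijk}u_iu_j$). In this notation $I_2 = B_{ii}$, $I_4 = B_{ij}B_{ij}$, $J_4 = B_{ij}u_iu_j$, $K_4 = u_iv_i$, $L_4 = u_iw_i$, $I_6 = v_iv_i$, $M_6 = w_iw_i$, $L_6 = B_{ij}u_iv_j$, $J_6 = B_{ij}u_iw_j$, while the two invariants to be eliminated become $K_6 = v_iw_i$ and $I_8 = B_{ij}B_{jk}u_iv_k$.

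The heart of the argument is to establish two syzygy relations, linear in $K_6$ and in $I_8$ and sharing the leading coefficient $2I_2J_2 - 3J_4$ of Proposition \ref{p3.1}, of the form
\begin{equation*}
  (2I_2J_2 - 3J_4)\,K_6 = P(I_2,J_2,I_4,J_4,K_4,L_4,I_6,J_6,L_6,M_6),
\end{equation*}
\begin{equation*}
  (2I_2J_2 - 3J_4)\,I_8 = Q(I_2,J_2,I_4,J_4,K_4,L_4,I_6,J_6,L_6,M_6),
\end{equation*}
with $P$ and $Q$ polynomials in the remaining invariants. I would derive them by combining the three-dimensionality of the ambient space (the vanishing of the rank-four antisymmetrizer, equivalently the linear dependence of any four vectors, together with the Cayley--Hamilton theorem for the $3\times 3$ matrix $B$) with the tensor identities special to a symmetric traceless third order tensor $\D$, and then contracting the resulting identities against $\uu$, $\vv$, $\ww$. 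A bidegree count in $(\D,\uu)$ supports this shape: the left-hand sides carry bidegrees $(6,4)$ and $(9,3)$ respectively, both attainable by products of the listed generators (for example $P$ may involve $I_4M_6$, $J_2^2I_6$, $I_2K_4L_4$, $J_2K_4^2$, $L_4L_6$, $J_6K_4$ and $I_2J_4^2$), whereas $I_{10}$, of bidegree $(10,0)$, cannot enter either right-hand side.

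Granting the two identities, the theorem follows by a case distinction. Wherever $2I_2J_2 - 3J_4 \neq 0$, division expresses $K_6$ and $I_8$ as rational, hence genuine, functions of the eleven invariants. Wherever $2I_2J_2 - 3J_4 = 0$, Proposition \ref{p3.1} forces $\D = \OOO$ or $\uu = \0$; in the former case $\vv = \0$ and in the latter $\ww = \0$, so that $K_6 = v_iw_i = 0$ and $I_8 = B_{ij}B_{jk}u_iv_k = 0$. Hence $K_6$ and $I_8$ are determined by the eleven invariants at every point (equal to the rational expression when the denominator is nonzero, and equal to $0$ otherwise), which is to say they are single-valued functions of those invariants; this completes the elimination.

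I expect the principal obstacle to be the explicit derivation of the two syzygies: determining $P$ and $Q$ and confirming that the leading coefficient is exactly $2I_2J_2 - 3J_4$ requires a lengthy and delicate tensor computation, in which the identities peculiar to three-dimensional symmetric traceless tensors must be invoked, since generic vector and matrix relations alone introduce auxiliary invariants such as $u_iB_{ij}B_{jk}u_k$ and $v_iB_{ij}v_j$ that lie outside the basis. The indispensable role of Proposition \ref{p3.1} is then to certify that $K_6$ and $I_8$ vanish on the locus $\{2I_2J_2 - 3J_4 = 0\}$ by identifying it with $\{\D = \OOO\} \cup \{\uu = \0\}$, so that the rational expressions extend to well-defined functions over the whole domain.
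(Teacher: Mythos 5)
Your overall strategy coincides with the paper's: start from the Olive--Auffray integrity basis, eliminate $K_6$ and $I_8$ via syzygies linear in each of them, and use Proposition \ref{p3.1} to show the eliminated invariants vanish on the degenerate locus, so that the rational expressions extend to genuine single-valued functions of the remaining eleven invariants. Your dictionary $B_{ij}=D_{ik\ell}D_{jk\ell}$ together with $\vv$, $\ww$, and all your translations of the thirteen generators are correct, and your bidegree bookkeeping (in particular the observation that $I_{10}$ cannot enter either right-hand side) is sound.

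The gap is that the two syzygies --- the entire technical content of the theorem --- are only conjectured, not exhibited, and the form you posit for the $I_8$ relation is not the one that actually holds at the relevant degree. The paper's degree-ten relation (\ref{e1}) reads $6J_2I_8 = \cdots + 6K_4K_6$: its leading coefficient is $6J_2$, not $2I_2J_2-3J_4$, and its right-hand side contains $K_6$. The elimination is therefore a two-step affair: $I_8$ is first expressed through the twelve invariants including $K_6$, with the degenerate case $J_2=0$ (i.e.\ $\uu=\0$, whence $I_8=0$) handled directly and without any appeal to Proposition \ref{p3.1}; only then is $K_6$ removed via the second relation (\ref{e2}), whose shape you did predict exactly. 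Whether a single relation $(2I_2J_2-3J_4)\,I_8=Q$ with $Q$ free of $K_6$ exists at bidegree $(9,3)$ is not established by your degree count --- such counting shows consistency of the shape, never existence --- and the obvious way to manufacture a $K_6$-free relation from (\ref{e1}) and (\ref{e2}) (multiply (\ref{e1}) by $2I_2J_2-3J_4$ and substitute) yields the leading coefficient $6J_2(2I_2J_2-3J_4)$, of bidegree $(2,4)$ rather than the $(2,2)$ you require. Your proposed derivation via the Cayley--Hamilton theorem for $B$ and the linear dependence of four vectors in three dimensions would, as you yourself concede, generate stray invariants such as $u_iB_{ij}B_{jk}u_k$ lying outside the basis, and you supply no mechanism for removing them; the paper instead finds its relations by a direct linear search among all degree-ten products of the thirteen generators and records them explicitly. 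Your case analysis on the locus $2I_2J_2-3J_4=0$ is correct as far as it goes, and your framework would close immediately if you substituted the paper's actual pair of relations for your conjectured pair, but as written the proof is incomplete at its core step.
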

\begin{proof}
Consider all possible tenth degree powers and products of these thirteen invariants\\ $I_2,J_2,I_4,J_4,K_4,L_4,I_6,J_6,K_6,L_6,M_6,I_8,I_{10}$ in the Olive-Auffray minimal integrity basis of $\A$.   Find linear relations among these tenth degree powers and products.   Then we have two syzygy relations among these thirteen invariants as follows.
\begin{align}  \label{e1}
6J_2I_8 &= -I_2^2J_2K_4 - I_2^3L_4 + 3I_2I_4L_4 -3I_2J_4K_4 + 4J_2I_4K_4 + 2I_2^2 J_6 \nonumber \\
& + 3I_2J_2L_6 - 3L_4I_6 - 6I_4J_6 + 3J_4L_6 + 6K_4K_6,
\end{align}
and
\begin{align*}
&2I_2J_2K_6 + I_2^2 J_2 J_4 - I_2 J_4^2  + 2I_2K_4L_4 + 3 J_2 K_4^2  - 2 J_2 I_4  J_4 \\
&+  J_2^2 I_6 - 2I_2^2 M_6 - 12K_4 J_6 + 6 L_4L_6 + 6 I_4 M_6 - 3 J_4K_6 = 0.
\end{align*}
i.e.,
\begin{align} \label{e2}
(2I_2J_2 - 3 J_4 )K_6 & = - I_2^2 J_2 J_4 + I_2 J_4^2 - 2I_2 K_4L_4 - 3 J_2 K_4^2  + 2 J_2 I_4 J_4 \nonumber \\
&- J_2^2 I_6 + 2 I_2^2 M_6 + 12 K_4 J_6 - 6 L_4 L_6 - 6 I_4 M_6.
\end{align}

We first use the syzygy relation (\ref{e1}).
If $\uu = \0$, then $J_2 = u_iu_i = 0$, and the right hand side of
(\ref{e1}) is also equal to zero.  In this case, we have $I_8 = D_{ijk}D_{ij\ell}u_kD_{pq\ell}D_{pqr}v_r = 0$, where $v_p:=D_{ijk}D_{ij\ell}D_{k\ell p}$.
If $\uu \not = \0$, then $J_2 = u_iu_i \not = 0$.   By the syzygy relation (\ref{e1}), we have
\begin{align*}
I_8 &= -{1 \over 6} I_2^2 K_4 + {2 \over 3}I_4K_4 + {1 \over 2} I_2 L_6\\
& + {- I_2^3 L_4 + 3I_2I_4L_4 -3I_2J_4K_4  + 2 I_2^2 J_6
  - 3L_4I_6 - 6I_4J_6 + 3J_4L_6 + 6K_4K_6 \over 6 J_2}.
\end{align*}
Then $I_8$ is a function of $I_2,J_2,I_4,J_4,K_4,L_4,I_6,J_6,K_6,L_6,M_6,I_{10}$.

We now use the syzygy relation (\ref{e2}).    If $2I_2J_2 - 3 J_4 = 0$, by Proposition \ref{p3.1}, either $\D = \OOO$ or $\uu = \0$.  This implies that  $K_6 = 0$.  Note in this case, the right hand side of (\ref{e2}) also equal to zero.     If $2I_2J_2 - 3 J_4 \not = 0$, we have
\begin{align*}
& K_6  = \\ & {- I_2^2 J_2 J_4 + I_2 J_4^2 - 2I_2 K_4L_4 - 3 J_2 K_4^2  + 2 J_2 I_4 J_4 - J_2^2 I_6 + 2 I_2^2 M_6 + 12 K_4 J_6 - 6 L_4 L_6 - 6 I_4 M_6 \over 2I_2J_2 - 3 J_4}.
\end{align*}
This shows that $K_6$ is a function of $I_2,J_2,I_4,J_4,K_4,L_4,I_6,J_6,L_6,M_6,I_{10}$.

Hence, $\{I_2,J_2,I_4,J_4,K_4,L_4,I_6,J_6,L_6,M_6,I_{10}\}$ is a function basis of the third order three-dimensional symmetric tensor $\A$.
\end{proof}

\section{This Function Basis Is An Irreducible Function Basis}

To show that $\{I_2,J_2,I_4,J_4,K_4,L_4,I_6,J_6,L_6,M_6,I_{10}\}$ is an irreducible function basis of the third order three-dimensional symmetric tensor $\A$, we only need to show that
each of these eleven invariants is not a function of the other ten invariants.

To show that each of $K_4, L_4, J_6$ and $L_6$ is not a function of the ten other invariants in this function basis, we may use a tactics, which is stated in the following proposition.

\begin{Proposition} \label{p4.1}  We have the following four conclusions.

(a) If there is a third order three-dimensional tensor $\A$ such that $K_4 = L_4 = J_6 = 0$ but $L_6 \not = 0$, then $L_6$ is not a function of $I_2, J_2, I_4, J_4, K_4, L_4$, $I_6, J_6, M_6$ and $I_{10}$.

(b) If there is a third order three-dimensional tensor $\A$ such that $K_4 = L_4 = L_6 = 0$ but $J_6 \not = 0$, then $J_6$ is not a function of $I_2, J_2, I_4, J_4, K_4, L_4$, $I_6, L_6, M_6$ and $I_{10}$.

(c) If there is a third order three-dimensional tensor $\A$ such that $K_4 = J_6 = L_6 = 0$ but $L_4 \not = 0$, then $L_4$ is not a function of $I_2, J_2, I_4, J_4, K_4$, $I_6, J_6, L_6, M_6$ and $I_{10}$.

(d) If there is a third order three-dimensional tensor $\A$ such that $L_4 = J_6 = L_6 = 0$ but $K_4 \not = 0$, then $K_4$ is not a function of $I_2, J_2, I_4, J_4, L_4$, $I_6, J_6, L_6, M_6$ and $I_{10}$.
 \end{Proposition}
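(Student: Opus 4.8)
The plan is to prove all four parts by a single parity argument based on the substitution $\uu \mapsto -\uu$ with $\D$ held fixed. First I would record the key structural observation: through the decomposition $\A \leftrightarrow (\D,\uu)$, each invariant in the basis is bi-homogeneous, with a well-defined degree in the vector $\uu$. Counting the number of factors of $\uu$ in each contraction, the invariants $I_2, J_2, I_4, J_4, I_6, M_6, I_{10}$ have \emph{even} $\uu$-degree ($0,2,0,2,0,4,0$ respectively), whereas $K_4, L_4, J_6, L_6$ have \emph{odd} $\uu$-degree ($1,3,3,1$ respectively). Consequently, for any fixed traceless symmetric $\D$ and any vector $\uu$, replacing $\uu$ by $-\uu$ leaves every even-degree invariant unchanged and flips the sign of every odd-degree invariant.

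Next I would set up the reconstruction. Since the map sending a symmetric tensor $\A$ to the pair $(\D,\uu)$ is a linear bijection, given any symmetric $\A$ corresponding to $(\D,\uu)$ the pair $(\D,-\uu)$ corresponds to a genuine third order three-dimensional symmetric tensor $\A'$, obtained by the reconstruction formula $A'_{ijk} = D_{ijk} - \tfrac{1}{5}(u_k\delta_{ij} + u_j\delta_{ik} + u_i\delta_{jk})$. No orthogonality of the substitution is needed: the invariants are defined on all symmetric tensors, and their values on $\A'$ are determined purely by the parity observation above.

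Then I would run the contradiction uniformly. Take part (a): suppose $\A$ satisfies $K_4 = L_4 = J_6 = 0$ and $L_6 \neq 0$, and form $\A'$ via $\uu \mapsto -\uu$. On $\A'$ the even invariants $I_2, J_2, I_4, J_4, I_6, M_6, I_{10}$ are unchanged, and the odd invariants $K_4, L_4, J_6$, being zero on $\A$, remain zero; hence all ten invariants other than $L_6$ agree on $\A$ and $\A'$. But $L_6(\A') = -L_6(\A) \neq L_6(\A)$. If $L_6$ were a function of the other ten, its value would be forced to coincide on $\A$ and $\A'$, a contradiction. Parts (b), (c), (d) are identical after permuting the roles of $L_6$ with $J_6$, $L_4$, $K_4$: in each case the three vanishing odd invariants are preserved by the flip while the nonzero target changes sign.

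There is no serious obstacle here; the argument is entirely structural. The only points requiring care are the correct bookkeeping of the $\uu$-degree of each invariant, so that exactly $K_4, L_4, J_6, L_6$ are identified as the sign-flipping ones, and the observation that the existence of even a single witness tensor satisfying the stated vanishing-and-nonvanishing pattern is all that the proposition assumes; the actual construction of such witnesses is deferred to the subsequent results.
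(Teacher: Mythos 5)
Your proposal is correct and follows essentially the same route as the paper: the paper's proof of Proposition 4.1 is exactly the parity argument of keeping $\D$ fixed and replacing $\uu$ by $-\uu$, under which $I_2, J_2, I_4, J_4, I_6, M_6, I_{10}$ are unchanged while $K_4, L_4, J_6, L_6$ flip sign, so the nonzero odd invariant takes two values against identical values of the other ten. Your extra bookkeeping of the $\uu$-degrees and the explicit reconstruction of $\A'$ from $(\D,-\uu)$ are correct details that the paper leaves implicit.
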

\begin{proof}
By the definition of invariants $I_2, J_2, I_4, J_4, K_4, L_4$, $I_6, J_6, L_6, M_6$ and $I_{10}$, if we keep $\D$ be unchanged but change $\uu$ to $-\uu$, then $I_2, J_2, I_4, J_4, I_6, M_6$ and $I_{10}$ are unchanged, but $K_4, L_4, J_6$ and $L_6$ change their signs.

We now prove conclusion (a).  If there is a third order three-dimensional tensor $\A$ such that $K_4 = L_4 = J_6 = 0$ but $L_6 \not = 0$, we may keep $\D$ be unchanged but change $\uu$ to $-\uu$, then $I_2, J_2, I_4, J_4, I_6, M_6$ and $I_{10}$ are unchanged,  $K_4, L_4$ and $J_6$ are still zeros, but $L_6$ changes its sign and value as it is not zero.   This implies that  $L_6$ is not a function of $I_2, J_2, I_4, J_4, K_4, L_4$, $I_6, J_6, M_6$ and $I_{10}$.    The other three conclusions (b), (c) and (d) can be proved similarly.
\end{proof}

We now present the main theorem of this section.

\begin{Theorem} \label{t4.2}
 The eleven invariant set $\{I_2,J_2,I_4,J_4,K_4,L_4,I_6,J_6,L_6,M_6,I_{10}\}$ is an irreducible function basis of the third order three-dimensional symmetric tensor $\A$.
\end{Theorem}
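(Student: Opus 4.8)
The plan is to combine Theorem \ref{t3.1}, which already shows the set is a function basis, with the reduction stated at the start of Section 4: a function basis is irreducible precisely when no member is a function of the remaining ten. I would therefore prove, invariant by invariant, that each of the eleven cannot be recovered from the other ten, grouping them according to the device that separates each.

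First, for the four invariants $K_4, L_4, J_6, L_6$ --- exactly those of odd degree in $\uu$ --- I would simply apply Proposition \ref{p4.1}, so the task reduces to exhibiting, for each, one tensor $\A$ at which the three prescribed invariants among these four vanish while the fourth does not. I would locate these four tensors by direct construction, choosing $\uu$ so that three of the relevant scalar contractions (the cubic form $D_{ijk}u_iu_ju_k$, the linear form $v_pu_p$, and the appropriate sextic contractions) vanish while the last survives; if no transparent symmetric choice presents itself, the numerical optimization used in Proposition \ref{p3.1} supplies them.

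Next, for the four invariants $I_2, I_4, I_6, I_{10}$ that depend on $\D$ alone, I would restrict to the subfamily $\uu = \0$. There $J_2, J_4, K_4, L_4, J_6, L_6$ and $M_6$ all vanish identically, so were $I_2$ a function of the other ten invariants it would, on traceless tensors, become a function of $I_4, I_6$ and $I_{10}$ alone --- contradicting Theorem \ref{t2.2}, which states that $\{I_2, I_4, I_6, I_{10}\}$ is an irreducible function basis of $\D$. The same restriction eliminates $I_4, I_6$ and $I_{10}$. Dually, for $J_2$ I would set $\D = \OOO$: then every other basis invariant vanishes while $J_2 = u_iu_i$ takes all nonnegative values, and two tensors with $\D = \OOO$ of different vector length separate $J_2$ from the rest.

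The remaining two invariants, $J_4$ and $M_6$, are the genuinely mixed even ones, and I expect them to be the main obstacle. For each I would produce two symmetric tensors on which the other ten invariants coincide but the target differs. Because two tensors $(\D,\uu)$ and $(\D, q\uu)$ with $q$ fixing $\D$ lie on a single orbit, holding $\D$ fixed cannot separate any invariant; the construction must vary both $\D$ and $\uu$ while keeping $I_2, I_4, I_6, I_{10}$ together with the other six invariants fixed, a heavily overdetermined requirement that I would meet by the same optimization technique as above. Assembling the four groups then shows that no invariant is a function of the other ten, which is exactly the asserted irreducibility.
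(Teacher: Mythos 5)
Your plan reproduces the paper's architecture almost exactly in its first two groups: restricting to $\uu=\0$ so that the seven mixed invariants vanish and Theorem \ref{t2.2} applies is precisely the paper's Part A (as is the $\D=\OOO$ separation of $J_2$), and invoking Proposition \ref{p4.1} for the four odd-in-$\uu$ invariants $K_4,L_4,J_6,L_6$ is the paper's Part B, where the substance is the four explicit example tensors the paper exhibits (the $J_6$ example in exact radicals). Up to that point your proposal is sound and matches the source.

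The genuine gap is in your handling of $J_4$ and $M_6$, in two respects. First, your orbit argument is incorrect: from the fact that $(\D,\uu)$ and $(\D,q\uu)$ lie on one orbit when $q$ stabilizes $\D$, it does not follow that fixing $\D$ and moving $\uu$ between \emph{different} stabilizer orbits cannot separate invariants. Indeed the paper separates $M_6$ with $\D$ held completely fixed (only $D_{123}=1$ nonzero, a tensor with a large stabilizer): the choices $\uu=(0,0,5)$ and $\uu=(5\sqrt{2}/2,\,5\sqrt{2}/2,\,0)$ give identical values of all ten other invariants, while $M_6$ jumps from $0$ to $625$. Your claimed necessity of varying both $\D$ and $\uu$ would have steered you away from this simple construction. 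Second, for $J_4$, deferring to numerical optimization cannot close the proof: you need two tensors on which the ten other invariants agree \emph{exactly}, and floating-point near-agreement establishes no functional independence. The paper's device here is a one-parameter family $A(\theta)$ on which $K_4=L_4=J_6=L_6\equiv 0$ and $I_2,J_2,I_4,I_6,I_{10}$ are constant, so that only $J_4(\theta)$ and $M_6(\theta)$ move; the intermediate value theorem produces $\theta_0\in(0,\pi/4)$ with $M_6(\theta_0)=M_6(3\pi/4)=\frac14$, while $J_4'(\theta)\ge 0$ on $[0,\pi/4]$ forces $J_4(\theta_0)\ge J_4(0)=2>1=J_4(3\pi/4)$. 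This continuity-plus-monotonicity mechanism along a curve that pins all the other invariants is the exactness idea missing from your Part C, and without it (or some substitute exact certificate) the irreducibility of $J_4$ remains unproved.
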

\begin{proof}  By Theorem \ref{t3.1}, $\{I_2,J_2,I_4,J_4,K_4,L_4,I_6,J_6,L_6,M_6,I_{10}\}$ is a function basis of $\A$.  It suffices to show that
each of these eleven invariants is not a function of the ten other invariants.

We divide the proof into three parts.

{\bf Part A.}  In this part, we show that each of $I_2, I_4, I_6, I_{10}$ and $J_2$ is not a function of the other ten invariants.   The first four invariants form an irreducible function basis of the symmetric and traceless tensor $\D$.  The fifth invariant $J_2$ forms an irreducible function basis of the vector $\uu$.   Using this property, we may prove that each of them is
not a function of the other ten invariants easily.

By Theorem \ref{t2.2}, $\{ I_2, I_4, I_6 , I_{10} \}$ is an irreducible function basis of $\D$.  This implies that each of these four invariants is not a function of the other three invariants.   Hence, each of these four invariants is not a function of the ten other invariants of $\{I_2,J_2,I_4,J_4,K_4,L_4,I_6,J_6,L_6,M_6,I_{10}\}$.

Let $\D = \OOO$, and $\uu \not = \uu'$ such that $u_iu_i \not = u'_iu'_i$.
Then $J_2$ takes two different values but the other ten invariants $I_2,I_4,J_4,K_4,L_4,I_6,J_6,L_6,M_6$ and $I_{10}$ are all zero.   This shows that $J_2$ is not a function of the ten other invariants $I_2,I_4,J_4,K_4,L_4,I_6,J_6,L_6,M_6$ and $I_{10}$.

{\bf Part B.}  In this part, we show that each of $K_4, L_4, J_6$ and $L_6$ is not a function of the ten other invariants.   We use Proposition \ref{p4.1} to realize this purpose.

We first show that $L_6$ is not a function of the other ten invariants.
Let $A_{111}, A_{112}, A_{113}, A_{122}$, $A_{123}, A_{133}, A_{222}, A_{223}, A_{233}$ and $A_{333}$ be the representatives of the components of $\A$.   If the values of these ten components are fixed, then the other components of $\A$ also fixed by symmetry.   Let $A_{111} = {3 \over 5}, A_{122} = {6 \over 5}, A_{133} = -{4 \over 5}, A_{223} = {1 \over 2}, A_{333} = -{1 \over 2}$, and $A_{112} = A_{113} = A_{123} = A_{222} = A_{233} = 0$,   Then we have $K_4 = L_4 = J_6 = 0$ and $L_6 = -2$, such that we may use  Proposition \ref{p4.1} (a).   The values of the other invariants are: $I_2 = 7, J_2 = 1, I_4 = {37 \over 2}$, $J_4 = 2$, $I_6 = 4$, $M_6 = 0, I_{10} = 4$.    By Proposition \ref{p4.1} (a), $L_6$ is not a function of $I_2, J_2, I_4, J_4, K_4, L_4$, $I_6, J_6, M_6$ and $I_{10}$.

 Then we show that $J_6$ is not a function of the other ten invariants.   Let
$$ A_{111} = {1 \over 6} \sqrt{{1\over 2} (149 - \sqrt{313})} - { 18 (-215 + 7 \sqrt{313})  \over
 5 \sqrt{8053043 - 308071 \sqrt{313}} } ,
$$
$$
 A_{112} =  { 121 (2963 - 103 \sqrt{313}) \over 10 (-215 + 7 \sqrt{313}) }  \sqrt{{298 - 2 \sqrt{313} \over
 648164815 - 26977811 \sqrt{313}}  } ,
$$
$$
 A_{113} = { 3966519 - 219867 \sqrt{313} \over 5 \sqrt{
 648164815 - 26977811 \sqrt{313}} (-215 + 7 \sqrt{313}) } ,
$$
$$
A_{122} = -{ 6 (-215 + 7 \sqrt{313}) \over  5 \sqrt{8053043 - 308071 \sqrt{313}} }, \qquad  A_{123} = 1 ,
$$
$$
A_{133} = -{1 \over 6} \sqrt{ {1\over 2} (149 - \sqrt{313})} - {6 (-215 + 7 \sqrt{313}) \over
 5 \sqrt{8053043 - 308071 \sqrt{313}} } ,
$$
$$
A_{222} = {363 (2963 - 103 \sqrt{313})  \over 10 (-215 + 7 \sqrt{313})}  \sqrt{{ 298 - 2 \sqrt{313}  \over
 648164815 - 26977811 \sqrt{313} }} ,
$$
$$
A_{223} = 1 + { 3966519 - 219867 \sqrt{313} \over
 5 \sqrt{648164815 - 26977811 \sqrt{313}} (-215 + 7 \sqrt{313}) } ,
$$
$$
A_{233} = { 121 (2963 - 103 \sqrt{313}) \over 10 (-215 + 7 \sqrt{313})} \sqrt{{298 - 2 \sqrt{313} \over
 648164815 - 26977811 \sqrt{313}}} ,
$$
$$
A_{333} = -1 + { 3 (3966519 - 219867 \sqrt{313})  \over
 5 \sqrt{648164815 - 26977811 \sqrt{313}} (-215 + 7 \sqrt{313})}.
$$
These values are solutions of $K_4 = L_4 = L_6 = 0$ and $J_6 \not = 0$.  Except that $A_{123}=1$, the approximate digit values of the other independent components are as follows:
$$
A_{111}=1.554, \  A_{112}=-0.1877, \  A_{113}=-0.01287, \  A_{122}=0.06780,
$$
$$
 A_{133}=-1.283, \  A_{222}=-0.5631, \  A_{223}=0.9871, \  A_{233}= -0.1877, \  A_{333} = -1.039.
$$
Then we have $K_4 = L_4 = L_6 = 0$ and $J_6 = 0.5112$, satisfying the condition of Proposition \ref{p4.1} (b).   The values of the other invariants are $I_2 = 17.29, J_2 = 1, I_4 = 132.6$, $J_4 = 2.547$, $I_6 = 83.81$, $M_6 = 0.1687$ and $I_{10} = -831$.   By Proposition \ref{p4.1} (b), $J_6$ is not a function of $I_2, J_2, I_4, J_4, K_4, L_4$, $I_6, L_6, M_6$ and $I_{10}$.

We now show that $L_4$ is not a function of the other ten invariants. Similarly, we can find a symmetric third order three-dimensional tensor $\A$ such that $K_4 = J_6 = L_6 = 0$ and $L_4 \neq 0$. To be specific, except that $A_{123}=1$,
the approximate digit values of the other independent components are as follows:
$$
A_{111}=1.0358, \  A_{112}=0.06373, \  A_{113}=-0.06357, \  A_{122}=1.8269,
$$
$$
 A_{133}=-1.9697, \  A_{222}=0.1912, \  A_{223}=0.9364, \  A_{233}= 0.06373, \  A_{333} = -1.1907.
$$
Then we have $K_4 = J_6 = L_6 = 0$ and $L_4 = -0.3843$, satisfying the condition of Proposition \ref{p4.1} (c).   We also have $I_2 = 32.2465$, $J_2 = 1$, $I_4 = 394.69$, $J_4 = 9.1213$, $I_6 = 509.67$, $M_6 = 3.2506$ and $I_{10} = 17825.1$.
By Proposition \ref{p4.1} (c), $L_4$ is not a function of $I_2, J_2, I_4, J_4, K_4$, $I_6, J_6, L_6, M_6$ and $I_{10}$.

We further show that $K_4$ is not a function of the other ten invariants.
Let
$$ A_{111} = {3 \over 5 \sqrt{2}}, \quad
A_{112} =  {\sqrt{3} \over 10 }, \quad
A_{113} = { 1 \over 10 }, \quad
A_{122} = { 4 \sqrt{2}  \over 15 } - {1 \over \sqrt{3} }, \quad
A_{123} = {1\over 3} + {1\over \sqrt{6} } ,
$$
$$
A_{133} = -{\sqrt{2} \over 15} + {1 \over \sqrt{3} },   \quad
A_{222} = {3 \sqrt{3} \over 10},    \quad
A_{223} = -{9 \over 10 },           \quad
A_{233} = {\sqrt{3} \over 10 } ,    \quad
A_{333} = {13 \over 10}.
$$
Then we have $L_4 = J_6 = L_6 = 0$ and $K_4 = {8 \over 9}$, satisfying the condition of Proposition \ref{p4.1} (d).  We also have
$$ I_2 = 8, \ J_2 = {3 \over 2}, \ I_4 = {88 \over 3}, \ J_4 = {8 \over 3}, \ I_6 = {64 \over 9}, \ M_6 = {11 \over 9}, \ I_{10} = {11776 \over 729}. $$
By Proposition \ref{p4.1} (d), $K_4$ is not a function of $I_2, J_2, I_4, J_4, L_4$, $I_6, J_6, L_6, M_6$ and $I_{10}$.

{\bf Part C.}  In this part, we show that each of $M_6$ and $J_4$ is not a function of the ten other invariants.   We cannot use Proposition \ref{p4.1} here.   However, we may use another tactics.   We try to find a tensor $\A$ there such that $K_4 = L_4 = J_6 = L_6= 0$  to reduce the influence of these four invariants.   Then we change some components of $\A$ such that $K_4$, $L_4$, $J_6$ and $L_6$ keep to be zero, the value of $M_6$ or $J_4$ is changed and the values of the remaining other six invariants unchanged.

We first show that $M_6$ is not a function of the ten other invariants.    Let $u_1 = 5a, u_2=5b, u_3=5c$, $D_{123}=d$ and the other six independent components of $\D$ be zeros.   Let $a=b=0$ and $c=d=1$.  Then $I_2=6, J_2 =25, I_4=12, J_4=50$, $K_4=L_4=I_6=J_6=L_6=I_{10}=0$, and $M_6 =0$.  Let $a=b={\sqrt{2}\over 2}$, $c=0$ and $d=1$.  We still have $I_2=6, J_2 =25, I_4=12, J_4=50$, $K_4=L_4=I_6=J_6=L_6=I_{10}=0$, but $M_6 =625$.  Hence, $M_6$ is not a function of $I_2,J_2,I_4,J_4,K_4,L_4,I_6,J_6,L_6$ and $I_{10}$.

Finally, we show that $J_4$ is not a function of the ten other invariants.      Let $A_{111} = \frac{3}{5}\cos \theta,\  A_{112} = \frac{1}{5}\sin \theta, \ A_{113} = 0, \ A_{122} = \frac{1}{5}\cos \theta, \  A_{123} = 1, \ A_{133} = \frac{1}{5}\cos \theta, \ A_{222} =  \frac{3}{5}\sin \theta, \  A_{223} = 1, \  A_{233} = \frac{1}{5}\sin \theta$ and $A_{333} = -1$,   Then we really have $K_4 = L_4 = J_6 = L_6= 0$.  We also have $I_2 = 10, \ J_2 = 1, \ I_4 = 44, \ I_6 = 16, \ I_{10} = -64$ and
$$
J_4(\theta) = 2+4\cos \theta \sin \theta  + 2\sin^2 \theta, \qquad
M_6(\theta) =\sin^2 \theta (2\cos \theta+\sin^2 \theta).
$$
Clearly, $J_4(\frac{3}{4}\pi) = 1$, $M_6(\frac{3}{4}\pi)=\frac{1}{4}$, $M_6(0) = 0$ and $M_6(\frac{\pi}{4}) = \frac{9}{4}$. Since $M_6(\theta)$ is continuous in the interval $[0, \frac{\pi}{4}]$, there exists $\theta_0 \in (0, \frac{\pi}{4})$ such that $M_6(\theta_0)=M_6(\frac{3}{4}\pi)=\frac{1}{4}$. On the other hand, we have
$$
J_4'(\theta) = 4\cos(2\theta)+2\sin(2\theta) \geq 0, \qquad \forall \theta \in \left[0, \frac{\pi}{4}\right].
$$
It follows that $J_4(\theta_0) \geq J_4(0) =2 > J_4(\frac{3}{4}\pi) = 1$. Hence, $J_4$ is not a function of $I_2$, $J_2$, $I_4$, $K_4$, $L_4$, $I_6$, $J_6$, $L_6$, $M_6$ and $I_{10}$.

Combining the results of these three parts, each of these eleven invariants is not a function of the ten other invariants.
Therefore, this eleven invariant set
 $\{I_2,J_2,I_4,J_4,K_4,L_4,I_6,J_6,L_6,M_6,I_{10}\}$ is indeed an irreducible function basis of $\A$.

\end{proof}

Part A and the first part of Part C of this proof to show that each of $I_2, J_2, I_4, I_6, M_6, I_{10}$ is not a function of the other ten invariants may follow Theorem 3.1 of \cite{CQZ-18}.   For self-sufficiency and completeness of this paper, we give this part of the proof directly.   The organization of the proof to three parts also makes the proof an integral entity.

\section{Significance of This Result}

This result is significant to the further research of irreducible function bases of higher order tensors.   First, this is the first result on irreducible function bases of a third order three-dimensional symmetric tensor.  Second, there are still at least three
syzygy relations among these eleven invariants, see (\ref{e3}-\ref{e5}).  This shows that an irreducible function basis consisting of polynomial invariants may not be algebraically minimal in the sense that the basis consists of polynomial invariants and there is no algebraic relations in these invariants \cite{Sh-67}.     The second point is observed as there are still some syzygy relations among these eleven invariants.

Consider all possible sixteenth degree powers or products of the eleven invariants\\ $I_2,J_2,I_4,J_4,K_4,L_4,I_6,J_6,L_6,M_6,I_{10}$.  Find linear relations among these sixteenth-degree powers or products.   Then we have the following three syzygy relations among these eleven invariants as follows.

\begin{align} \label{e3}
&2 I_2^3 J_2^3 J_4 - 4 I_2 J_2^3 I_4 J_4 - 6 J_2^3 J_4 I_6 - 9 I_2^2 J_2^2 J_4^2 + 18 J_2^2 I_4 J_4^2 + 9 J_4^4 + 36 I_2 J_2 J_6^2  - 54 J_4 J_6^2  - 48 I_2 J_2^2 K_4 J_6 \nonumber \\
&  + 144 J_2 J_4 K_4 J_6  + 12 I_2 J_2^3 K_4^2 - 36 J_2^2 J_4 K_4^2 - 24 I_2^2 J_2 L_4 J_6  + 36 I_2 J_4 L_4 J_6  + 12 I_2^2 J_2^2 K_4 L_4   \nonumber \\
& - 18 I_2 J_2 J_4 K_4 L_4 - 18 J_4^2 K_4 L_4 + 6 I_2^3 J_2 L_4^2 - 6 I_2 J_2 I_4 L_4^2 - 9 I_2^2 J_4 L_4^2 + 9 I_4 J_4 L_4^2 - 36 J_2 J_4 L_4 L_6 \nonumber \\
& - 6 I_2^3 J_2^2 M_6 + 12 I_2 J_2^2 I_4 M_6 + 9 J_2^2 I_6 M_6 + 36 I_2^2 J_2 J_4 M_6 - 72 J_2 I_4 J_4 M_6 - 18 I_2 J_4^2 M_6 - 108 K_4 J_6 M_6 \nonumber \\
&   + 27 J_2 K_4^2 M_6 + 18 I_2 K_4 L_4 M_6 + 54 L_4 L_6 M_6 - 18 I_2^2 M_6^2 + 54 I_4 M_6^2 = 0,
\end{align}

\begin{align} \label{e4}
& {4 \over 9} I_2^3 J_2^3 K_4 + {2 \over 9} I_2^4 J_2^2 L_4 + {4 \over 3} I_2^3 J_2 J_4 L_4 - {8 \over 9} I_2 J_2^3 I_4 K_4 - {4 \over 9} I_2^2 J_2^2 I_4  L_4 - {4 \over 3} I_2^2 J_2^2 J_4 K_4 - 2 I_2^2 J_4^2 L_4 \nonumber \\
&  + 2 I_2^2 K_4 L_4^2 + 2 J_2^2 K_4^3 + 4 I_2 J_2 J_4^2 K_4  + 5 I_2 J_2 K_4^2 L_4 - 4 I_2 J_2 I_4 J_4 L_4 - {4 \over 3} I_2^3 J_2^2 J_6 + {2 \over 3} J_2^3 K_4 I_6 \nonumber \\
& + {1 \over 3} I_2 J_2^2 L_4 I_6 + {8 \over 3} I_2 J_2^2 I_4 J_6  + {4 \over 3} I_2^2 J_2 J_4 J_6 - 2 I_2^3 L_4 M_6 + J_2 J_4 L_4 I_6 - 16 I_2 K_4 L_4 J_6 - 14 J_2 K_4^2 J_6 \nonumber \\
&  + 6 I_2 L_4^2 L_6 + 4 J_2 K_4 L_4 L_6 + 6 I_2 I_4 L_4 M_6 - 2 I_2 J_4 K_4 M_6 + 4 J_2 I_4 K_4 M_6 + 4 I_2^2 J_6 M_6 - 2 J_2^2 I_6 J_6 \nonumber \\
&   - 4 I_2 J_2 L_6 M_6 - 12 I_4 J_6 M_6    + 6 J_4 L_6 M_6 + 24 K_4 J_6^2  - 12 L_4 J_6  L_6 - 4 J_4^3 K_4 + 4 I_4 J_4^2 L_4 - J_4 K_4^2  L_4 = 0
\end{align}

and

\begin{align} \label{e5}
&\frac{1}{18}I_2^5 J_2^3 - {2 \over 9} I_2^3 J_2^3 I_4 + {2 \over 9} I_2 J_2^3 I_4^2 + \frac{1}{12}I_2^2 J_2^3 I_6 - {1\over 6}J_2^3 I_4 I_6  - {1 \over 6} I_2^4 J_2^2 J_4 + {1 \over 3} I_2^2 J_2^2 I_4 J_4 + {1 \over 2} I_2 J_2^2 J_4 I_6  \nonumber \\
&  + {1\over 2} I_2^3 J_2 J_4^2 - I_2 J_2 I_4 J_4^2 - {3 \over 4} J_2 J_4^2 I_6 - {1 \over 2} I_2^2 J_4^3 + I_4 J_4^3  - I_2^2 J_2 K_4 J_6 + 2 J_2 I_4 K_4 J_6 + {1 \over 4} I_2^2 J_2^2 K_4^2  \nonumber \\
&  -  {1 \over 2} J_2^2 I_4 K_4^2 + {3 \over 2} I_2 J_2 J_4 K_4^2 - {9 \over 4} J_4^2 K_4^2 + {1\over 2} I_2^3 J_2 K_4 L_4 - I_2 J_2 I_4 K_4 L_4 - {1\over 2} I_2^2 J_4 K_4 L_4 + I_4 J_4 K_4 L_4   \nonumber \\
&  + 2 I_2 J_2 J_6 L_6 - 3 J_4 J_6 L_6 - 2 I_2 J_2^2 K_4 L_6  + 3 J_2 J_4 K_4 L_6 - {1 \over 2} I_2^2 J_2 L_4 L_6 - J_2 I_4 L_4 L_6 + {3\over 2} I_2 J_4 L_4 L_6   \nonumber \\
& - {1\over 6} I_2^4 J_2 M_6 + {5\over 6} I_2^2 J_2 I_4  M_6 - J_2 I_4^2 M_6 - I_2 J_2 I_6  M_6 + {3\over 2} J_4 I_6 M_6 =0.
\end{align}

As shown by Theorem \ref{t4.2}, these three syzygy relations do not imply any single-valued function relation of any of these eleven invariants, with respect to the ten other invariants.

The second point is meaningful to the further research of irreducible function bases of higher order tensors.  For example, for the nine invariant Smith-Bao minimal integrity basis of a fourth order three-dimensional symmetric and traceless tensor, there are five syzygy relations \cite{CQZ-18, Sh-67}.   These five syzygy relations are not so well-structured like (\ref{e1}) and (\ref{e2}), but even more complicated than (\ref{e3}-\ref{e5}).   However, it is still possible that  the nine invariant Smith-Bao minimal integrity basis is indeed an irreducible function basis of  a fourth order three-dimensional symmetric and traceless tensor, just like  the four invariant Smith-Bao minimal integrity basis is indeed an irreducible function basis of  a third order three-dimensional symmetric and traceless tensor, which was proved in \cite{CQZ-18}.

In Section 4, we show that the eleven invariant function basis is indeed an irreducible function basis, by showing that each of these eleven invariants is not a function of the ten other invariants.   This is the method proposed by Pennisi and Trovato \cite{PT-87}.   However, we divide the proof into three parts.  In Part A, we show that each of the five invariants $I_2, I_4, I_6$, $I_{10}$ and $J_2$, which form the irreducible function bases of the composition tensors $\D$ and $\uu$, is not a function of the ten other invariants.  In Part B, we use Proposition \ref{p4.1} to show that each of $K_4$, $L_4$, $L_4$ and $J_6$ is not a function of the ten other invariants.  In Part C, we use another tactics to show that each of the remaining two invariants $M_6$ and $J_4$ is not a function of the ten other invariants.   Such tactics may be also instructive for the further research of irreducible function bases of higher order tensors.

\bigskip


\begin{thebibliography}{10}



\bibitem{Boe-87} J.P. Boehler, {\sl Applications of tensor functions in solid mechanics}, Springer, New York, 1987.

\bibitem{BKO-94} J.P. Boehler, A.A. Kirillov, and E.T. Onat, ``On the polynomial invariants of the elasticity tensor'',
  {\sl Journal of elasticity \bf 34}(2) (1994) 97--110.


\bibitem{CQZ-18}  Y. Chen, L. Qi and W. Zou, ``Irreducible function bases  of isotropic invariants of third and fourth order symmetric tensors'', arXiv1712.02087v3, Febrary 2018.




\bibitem{HLL-09} D. Henrion, J.B. Lasserre and J. L\"{o}fberg, ``GloptiPoly 3: moments, optimization and semidefinite programming'', {\sl Optimization Methods and Software \bf 24} (2009) 761-779.






\bibitem{Ol-17} M. Olive, ``About Gordan's algorithm for binary forms''. {\sl Foundations of Computational Mathematics \bf 17}(6) (2017) 1407--1466.

\bibitem{OA-14} M. Olive and N. Auffray, ``Isotropic invariants of completely symmetric third-order tensor'', {\sl J. Math. Phys. \bf 55} (2014) 092901.

\bibitem{OKA-17} M. Olive, B. Kolev and N. Auffray, ``A minimal integrity basis for the elasticity tensor'',
{\sl Arch. Rational Mech. Anal. \bf 226} (2017) 1--31.

\bibitem{PT-87} S. Pennisi and M. Trovato, ``On the irreducibility of Professor G.F. Smith's representations for isotropic functions'', {\sl Int. J. Engng. Sci. \bf 25} (1987) 1059-1065.


\bibitem{PR-59} A.C. Pipkin and R.S. Rivlin, ``The formulation of constitutive equations in continuum physics. I" {\sl Arch. Rational Mech. Anal. \bf 4} (1959) 129-144.



\bibitem{Sh-67} T. Shioda, ``On the graded ring of invariants of binary octavics'', {\sl American Journal of Mathematics, \bf 89} (1967) 1022-1046.




\bibitem{SB-97} G.F. Smith and G. Bao, ``Isotropic invariants of traceless symmetric tensors of orders three and four'', {\sl Int. J. Eng. Sci. \bf 35} (1997) 1457--1462.


\bibitem{SR-58} G.F. Smith and R.S. Rivlin, ``The strain-energy function for anisotropic elastic materials", {\sl Trans. Amer. Math. Soc. \bf 88} (1958) 175-193.

\bibitem{RE-55} R.S. Rivilin and J.L. Ericksen, ``Stress-deformation relations for isotropic materials'', {\sl J. Ratl. Mech. Anal. \bf 4} (1955) 323-425.




\bibitem{Spe-70} A.J.M. Spencer, ``A note on the decomposition of tensors into traceless symmetric tensors'', {\sl International Journal of Engineering Science, \bf 8}(6) (1970) 475-481.

\bibitem{St-06}
J.F. Sturm, ``SeDuMi version 1.1R3'', (2006). Available at http://sedumi.ie.lehigh.edu.

\bibitem{We-16} H. Weyl, {\sl The classical groups: their invariants and representations}, Princeton university press, 2016.

\bibitem{Zh-94} Q.S. Zheng, ``Theory of representations for tensor functions - A unified invariant approach to constitute equations'',  {\sl Appl. Mech. Rev. \bf 47} (1994) 545-587.

\bibitem{Zh-96} Q.S. Zheng, ``Two-dimensional tensor function representations involving third-order tensors'',
   {\sl Arch. Mech. \bf 48} (1996) 659--673.

\bibitem{ZB-95} Q.S. Zheng and J. Betten, ``On the tensor function representations of 2nd-order and 4th-order tensors. Part I'',
   {\sl ZAMM Z. angew. Math. Mech. \bf 75} (1995) 269--281.


\end{thebibliography}
\end{document}